\renewcommand{\leq}{\leqslant}
\renewcommand{\geq}{\geqslant}
\newtheorem{theorem}{Theorem}
\newtheorem{proposition}[theorem]{Proposition}
\theoremstyle{remark}
\newtheorem{example}[theorem]{Example}
\newtheorem{remark}[theorem]{Remark}
\newcommand{\dat}{\text{DA}_2}
\newcommand{\dato}{\text{DA}^1_2}
\newcommand{\edge}[1]{\overset{#1}\to}
\newcommand{\adom}[1]{\mathrm{adom}(#1)}
\begin{document}
\title{Undecidability of satisfiability in
the algebra of finite binary relations with
union, composition, and difference}
\author{Tony Tan, Jan Van den Bussche, and Xiaowang Zhang \\
Hasselt University}
\maketitle
\begin{abstract}

We consider expressions built up from binary relation names using
the operators union, composition, and set difference.  We show
that it is undecidable to test whether a given such expression
$e$ is finitely satisfiable, i.e., whether there exist finite
binary relations that can be substituted for the relation names
so that $e$ evaluates to a nonempty result.  This result
already holds in restriction to expressions that mention just a
single relation name, and where the difference operator can be
nested at most once.

\end{abstract}

\section{Introduction}

The calculus (or algebra) of binary relations was invented by Peirce and
Schr\"oder and further developed by Tarski and his collaborators
\cite{tarski_relcalc,pratt_relcalc,maddux_originra}.  Hence we
will denote it by TA (for Tarski Algebra).
TA consists
of the operators union, complement, composition, and inverse, and
provides the empty and the identity relations as constants.  At
present, this algebra (often extended with the transitive
closure operator) provides a nice theoretical foundation for
query languages for graph databases modeled as finite binary
relational structures
\cite{good_tarski,catemarx_sigmodrecord,rafragments,wood_survey,gxpath}.
Also practical graph database query languages such as Gremlin fit
in this framework.

Specifically, given a vocabulary $\Gamma$ of binary relation
names, we can consider expressions built up using the names in
$\Gamma$ and the constants and operators mentioned above.  These
expressions serve as abstractions of query expressions evaluated
on graph databases, viewed as relational structures over
$\Gamma$.  The result of a query is again a binary relation.  For
example, for $a,b \in \Gamma$, the expression $aaa - b$
asks for all pairs $(x,y)$ such that one can walk from
$x$ to $y$ in three steps using $a$-edges, but there is no direct
$b$-edge from $x$ to $y$. Here, the operation of composition is
denoted simply by juxtaposition, and $-$
(set difference) can be expressed in terms of union and
complement by $r - s = (r^c \cup s)^c$.

In this manner, one can express precisely the binary queries
definable in FO$^3$, the fragment of first-order logic with three
variables \cite{tarskigivant,marxvenema_multi}.  In particular,
one can translate effectively from an FO$^3$ formula
with two free variables to a TA expression,
and back.  This connection with first-order logic provides
immediate insight in the classical decision problem in the context of TA:
given a vocabulary $\Gamma$ and a TA expression $e$ over
$\Gamma$, is $e$ satisfiable?  That is, does there exists a
structure $I$ over $\Gamma$ such that $e$ on $I$ evaluates to a
nonempty result?  Since satisfiability for FO$^3$ is undecidable
\cite{bgg_decision,schoenfeld_undecfinra},
satisfiability for TA is undecidable as well.

This undecidability result can be sharpened considerably: it
already holds for the fragment of TA consisting only of union,
complement, and composition \cite{andreka_memoir}.  In this
paper, we show that undecidability continues to hold when we have
only the relative form of complement provided by the set
difference operation.  Concretely, we consider a fragment of TA
that we call the Downward Algebra (DA): its only operators are
union, intersection, composition, and set difference.  The name
of this fragment is inspired by its salient property that, when
viewing binary relations as directed graphs, DA expressions can
only talk about pairs of elements formed by following edges in
the forward (or downward) direction.\footnote{A similar
terminology has been used in the context of XPath, which is a
form of TA used on tree structures as opposed to general graphs
\cite{figueira_downward}.} The focus on set difference, as
opposed to general complement, is motivated by the database query
language setting, where set difference is the standard form of
negation \cite{ahv_book}.  We will actually show that
undecidability already holds for DA expressions in which the
nesting depth of difference operators is at most two, and that
use only a single relation name.  We denote this fragment of DA
by $\dato$.

Our result is also relevant to expressive description logics and
dynamic logics.  Indeed, DA expressions can be viewed as extended
`role' expressions in description logic, or `programs' in dynamic
logic \cite{dlhandbook,dynamiclogicbook}, so our result shows
that satisfiability of such extended expressions or formulas is
undecidable already for $\dato$.  Known undecidability results
for expressive dynamic/description logics assume either the full
complement or the transitive closure operator
\cite{vrgoc_containment}.  An undecidability proof given by Lutz
and Walther \cite{lutz_atomneg} also uses only set difference on
binary relations, but additionally needs the identity relation
and the `diamond' operator $\langle r \rangle = \{(x,x) \mid \exists y :
(x,y) \in r\}$ on binary relations.  On the other hand, dynamic
logic where complement can be applied only relative to the
identity relation (so-called `formula negation'), as well as to
relation names (so-called `atomic negation'), is still decidable
\cite{lutz_atomneg}.  Thus, our result sharpens known
undecidability results and helps delineating the boundary of
undecidability.  We repeat that DA contains neither the identity
relation nor the diamond operator.

We should make clear that our result is specifically about
satisfiability by a \emph{finite} structure.  The problem of
deciding unrestricted satisfiability for DA expressions remains
open.  

This paper is further organized as follows.
Section~\ref{secdefin} defines DA, the fragment $\dato$, and the
corresponding satisfiability problem formally.
Section~\ref{secproof} proves undecidability of finite
satisfiability for general vocabularies.  Section~\ref{secone}
reduces the problem to a vocabulary with just a single relation
name.  Section~\ref{seconcl} concludes.

\section{Satisfiability of DA expressions} \label{secdefin}

Let $\Gamma$ denote a finite vocabulary of binary relation names.  The
expressions $e$ of DA over $\Gamma$ are defined by the following
grammar, where $a$ ranges over the elements of $\Gamma$:
$$ e ::= a \mid e \cup e \mid e \cap e \mid e - e \mid e \cdot e $$
The dot operator, which will denote composition, is often omitted
when writing expressions,
thus denoting composition simply by juxtaposition.  For
example, for $a,b \in \Gamma$, the expression $(a\cdot a -b)
\cdot a$ is also written as $(aa-b)a$.

A \emph{structure} over $\Gamma$ is a mapping $I$ assigning to
every $a \in \Gamma$ a binary relation $a^I$.  In this
paper, we focus on finite structures, so the binary relations
$a^I$ must be finite unless explicitly specified otherwise.
It is natural to view such a structure as a directed graph where
edges are labeled by relation names.  Accordingly we will refer
to a pair $(x,y)$ in $a^I$ as an `$a$-edge' and denote it by $x \edge a
y$.

The relation
defined by an expression $e$ in a structure $I$, denoted by
$e(I)$, is defined inductively as follows:
\begin{itemize}
\item
$a(I) = a^I$;
\item
$(e_1 \cup e_2)(I) = e_1(I) \cup e_2(I)$;
\item
$(e_1 \cap e_2)(I) = e_1(I) \cap e_2(I)$;
\item
$(e_1 - e_2)(I) = \{(x,y) \in e_1(I) \mid (x,y) \notin e_2(I)\}$;
\item
$(e_1 \cdot e_2)(I) = \{(x,y) \mid \exists z : (x,z) \in e_1(I)$
and $(z,y) \in e_2(I)\}$.
\end{itemize}

An expression $e$ over $\Gamma$ is called \emph{finitely
satisfiable} if there exists a structure $I$ over $\Gamma$ such
that $e(I)$ is nonempty.

\begin{remark}
The standard notion of structure would include an explicit set $U$,
called the domain of the structure, so that the relations $a^I$
are binary relations on $U$.  In the presence of a complementation
operation this is important, as then the complement of a relation
in a structure with domain $U$ is taken with respect to $U \times
U$.  In our setting, however, we only have set difference, so an
explicit domain would be irrelevant.  Our notion of structure
without an explicit domain actually agrees with the standard
notion of `database instance' in database theory \cite{ahv_book}.
\qed
\end{remark}

\begin{example}
A trivial example of an unsatisfiable expression is
$a-a$, but here is a less trivial example.  For relation names
$a$ and $b$, the expression $$ aaa - ((aa-b)a \cup ba) $$ is
neither finitely satisfiable nor satisfiable by an infinite
structure.  In proof, consider a pair $(x,y)$ that would belong
to the result of evaluating this expression in some structure
(for brevity we are omitting explicit reference to this
structure).  Then $(x,y) \in aaa$ so there exist $a$-edges
$(x,x_1)$, $(x_1,x_2)$, and $(x_2,y)$.  Since $(x,y) \notin
(aa-b)a$, the $b$-edge $(x,x_2)$ must be present.  But then
$(x,y) \in ba$, which is in contradiction with the last part of
the expression.
\qed
\end{example}

\begin{example}
Expressions not involving the difference operator are always
satisfiable, even by a finite series-parallel graph
\cite{dg_nfbr}.  Using difference, we can give an expression $e$
that is finitely satisfiable, but not by a series-parallel graph:
$$ a(a \cap aa) - (aa-a)a $$
Indeed we have $(1,4) \in e(W)$ where $W$ is the canonical
non-series-parallel graph \cite{valdes_seriesparallel}:
$$
\begin{tikzpicture}
\node (1) at (0,0) {$1$};
\node (2) at (1,0) {$2$};
\node (3) at (2,0) {$3$};
\node (4) at (3,0) {$4$};
\path (1) edge[->] (2)
      (2) edge[->] (3)
      (3) edge[->] (4);
\path (1) edge[out=30,in=150,->] (3);
\path (2) edge[out=-30,in=-150,->] (4);
\end{tikzpicture}
$$

To see that $e$ cannot be satisfied by any series-parallel graph,
suppose $(x,y)$ belongs to the result of evaluating $e$ on some
structure.
Since $(x,y) \in a(a\cap aa)$, there
exist edges $x \to u_1 \to u_2 \to y$ and $u_1 \to y$
(we omit the labels on the edges which are all $a$).
Since $(x,y) \notin (aa-a)a$, there must be an edge $x \to u_2$.
If at least two of the four elements $x$, $u_1$, $u_2$ and $y$
are identical, the graph contains a cycle and is not
series-parallel.  If all four elements are distinct,
we have a subgraph isomorphic to $W$ above, so the
structure is not series-parallel \cite{valdes_seriesparallel}.
\qed
\end{example}

\begin{example} \label{infinity}
We can also give an example of an `infinity axiom' in DA:
an expression that is not
finitely satisfiable but that is infinitely satisfiable.  Let $c$
be a third relation name apart from $a$ and $b$, and consider the
following expression $e$:
$$ aba - \bigl ( a(ba-a) \cup (a-ab)a \cup a(b-c)a \cup a(cc-c)a
\cup a(cb\cap b)a \bigr ) $$
To see that $e$ is infinitely satisfiable, denote the set of
natural numbers without zero by $\mathbf N$.  Let $\infty$ denote
an element that is neither zero nor in $\mathbf N$.  Now consider the
infinite structure $I$ where
\begin{align*}
a^I & = \{(0,i) \mid i \in \mathbf N \ \& \ i \geq 2\} \cup
\{(i,\infty) \mid i \in \mathbf N\} \\
b^I & = \{(i+1,i) \mid i \in \mathbf N\} \\
c^I & = \{(j,i) \mid i,j \in \mathbf N \ \& \ j>i\}
\end{align*}
Then one can verify that $(0,\infty) \in e(I)$.

To see that $e$ is not finitely satisfiable,
suppose that $(x,y)$ would belong to the relation defined by $e$ in some finite
structure.  Then $(x,y) \in aba$ so there exist edges
$x \edge a u_2 \edge b u_1 \edge a y$.
Since $(x,y) \notin a(ba-a)$ we have also $u_2 \edge a y$.

Since $(x,y) \notin (a-ab)a$, there must exist edges $x \edge a
u_3 \edge b u_2$.  Again since $(x,y) \notin a(ba-a)$ we have
also $u_3 \edge a y$.  Continuing in this fashion we obtain an
infinite sequence $u_1,u_2,\dots$ with edges $x \edge a u_i$ for
every $i \geq 2$ and edges $u_i \edge a y$ and $u_{i+1} \edge b
u_i$ for every $i \geq 1$.

Now since $(x,y) \notin a(b-c)a$ we have $u_{i+1} \edge c u_i$ for
every $i \geq 1$.  Then since $(x,y) \notin a(cc-c)a$ we have
$u_j \edge c u_i$ for all $j>i\geq 1$.
Since the structure is finite, there must exist $1 \leq i < j$ so that
$u_i=u_j$.  Hence we have a self-loop $u_j \edge c u_j$, implying
$(x,y) \in a(cb\cap b)a$ which is in contradiction with the
last part of the expression $e$.
\qed
\end{example}

The \emph{finite satisfiability problem}
for DA takes as input $\Gamma$ and $e$, and asks to
decide whether $e$ is finitely satisfiable.  We will show that this
problem is undecidable already when $\Gamma$ consists of a single
relation name, and the difference degree of $e$ is at most two.
Here, the \emph{difference degree}, denoted by $\deg e$,
indicates how deeply applications of the difference operator are
nested, and is inductively defined as follows:
\begin{itemize}
\item
$\deg a = 0$;
\item
$\deg (e_1 \cup e_2) = \deg (e_1 \cap e_2) = \deg (e_1 \cdot e_2)
= \max (\deg e_1, \deg e_2)$;
\item
$\deg (e_1 - e_2) = \max (\deg e_1, \deg e_2) + 1$.
\end{itemize}
The set of expressions with difference degree at most two is
denoted by $\dat$.  The set of $\dat$ expressions over a
single relation name is denoted by $\dato$.  In
Section~\ref{secproof}, we will show that finite satisfiability for
$\dat$ is undecidable; in Section~\ref{secone} we will show that
this already holds for $\dato$.

\begin{remark} \label{remarkintersection}
Our focus on $\dat$ explains why we have included intersection in
DA, while this operator is actually redundant in the presence of
difference by $r \cap s = r - (r-s)$.  It appears that
intersection is no longer redundant in $\dat$; simulating
it using difference would increase the difference degree by two
times the number of nested applications of intersection.  It
remains open whether satisfiability of $\dat$ expressions not
using intersection is still undecidable.
\end{remark}

\section{Reduction from context-free grammar universality}
\label{secproof}

Consider a context-free grammar $G = (\Sigma,V,S,P)$ with set of
terminals $\Sigma$, set of nonterminals $V$, start symbol $S$,
and set of productions $P$.  Then $G$ is called \emph{universal}
if $L(G)$, the language generated by $G$, equals $\Sigma^*$.
Universality of context-free grammars is a well-known undecidable
problem \cite{hu}. We will reduce the complementary problem,
nonuniversality, to finite satisfiability of $\dat$ expressions.
The reduction will be based on a variation of the idea behind
Example~\ref{infinity}.

For technical reasons, we consider only grammars without
empty productions, and redefine universality to mean that all
nonempty strings over $\Sigma$ belong to $L(G)$.  Clearly,
this notion of universality is still undecidable.

For any grammar $G$ as above we construct a vocabulary $\Gamma_G$
and a $\dat$-expression $e_G$ over $\Gamma_G$ as follows.
Choose three symbols
$\alpha$, $\omega$ and $X$ not in $\Sigma \cup V$, and define $\Gamma_G =
\Sigma \cup V \cup \{\alpha,\omega,X\}$.  We define: $$ e_G = \varphi_0 -
(\varphi_1 \cup
\varphi_2 \cup
\varphi_3 \cup
\varphi_4 \cup
\varphi_5 \cup
\varphi_6 \cup
\varphi_7), $$
where the subexpressions $\varphi_i$ are defined as follows.  We
use $\Sigma$ as a shorthand for $\bigcup_{b \in \Sigma} b$.
\begin{align*}
\varphi_0 & = \alpha \Sigma \omega \\
\varphi_1 & = \alpha \Sigma (\omega - \alpha) \\
\varphi_2 & = \alpha(\Sigma\alpha - \alpha) \\
\varphi_3 & = \bigcup_{Z_0 \to Z_1 \dots Z_n \in P} \alpha (Z_1
\cdots Z_n-Z_0) \alpha \\
\varphi_4 & = (\alpha - \alpha \Sigma) S \omega \\
\varphi_5 & = \alpha (\Sigma - X) \alpha \\
\varphi_6 & = \alpha (XX - X) \alpha \\
\varphi_7 & = \alpha (X\Sigma \cap \Sigma) \alpha
\end{align*}

\begin{proposition}
$G$ is nonuniversal if and only if $e_G$ is finitely satisfiable.
\end{proposition}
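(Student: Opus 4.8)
The plan is to prove the two directions separately, in both cases exploiting the intended reading of a satisfying structure as an encoding of a single nonempty string $w$ together with the nonterminal `spans' derivable from $w$.

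For the forward direction (nonuniversal $\Rightarrow$ finitely satisfiable), I would fix a nonempty witness $w = b_1 \cdots b_k \notin L(G)$ and build an explicit finite structure $I$ on position nodes $p_0, \dots, p_k$ plus two extra nodes $x$ and $y$. Concretely, each letter is recorded by a terminal edge $p_{i-1} \edge{b_i} p_i$; I put $\alpha$-edges from $x$ to every $p_i$ and from every $p_i$ to $y$, an edge $p_k \edge{\omega} y$, a nonterminal edge $p_i \edge{Z} p_j$ exactly when $Z \Rightarrow^* b_{i+1} \cdots b_j$, and let $X$ be the transitive closure $\{(p_i,p_j) : i < j\}$ of the letter chain. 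The content of this direction is a careful but routine check that $(x,y) \in \varphi_0(I)$ while $(x,y) \notin \varphi_m(I)$ for each $m$: closure of derivability under productions yields $(x,y)\notin\varphi_3(I)$, the failure $w\notin L(G)$ yields $(x,y)\notin\varphi_4(I)$ since no $S$-edge spans $p_0$ to $p_k$, and acyclicity of the chain yields the three $X$-clauses $\varphi_5,\varphi_6,\varphi_7$.

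For the backward direction (finitely satisfiable $\Rightarrow$ nonuniversal), I would start from a finite $I$ and a pair $(x,y) \in e_G(I)$ and reconstruct a string. From $(x,y)\in\varphi_0(I)$ I obtain a path $x \edge{\alpha} s \edge{b} t \edge{\omega} y$, and $\varphi_1,\varphi_2$ let me propagate the property ``anchored between $x$ and $y$ by $\alpha$'' along the letter chain, so that every position carries both an incoming $\alpha$-edge from $x$ and an outgoing $\alpha$-edge to $y$; this anchoring is exactly what the sandwich-form clauses $\varphi_3$ through $\varphi_7$ need in order to bite at every position. The $X$-gadget then plays the role of the infinity axiom of Example~\ref{infinity}: $\varphi_5$ forces a parallel $X$-edge along each letter, $\varphi_6$ makes $X$ transitive, and any cycle among the anchored positions would produce a self-loop $p \edge{X} p$ and hence a configuration in $X\Sigma \cap \Sigma$ forbidden by $\varphi_7$; consequently the chain is acyclic and tracing the letter edges from the start position $p_0$ (the $\alpha$-successor of $x$ lying in $\alpha - \alpha\Sigma$) to the $\omega$-anchored end yields a genuine finite nonempty string $w$. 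Finally, by induction on derivation length and using $\varphi_3$ at each step, every nonterminal $Z$ deriving a span of $w$ must carry the corresponding edge; in particular $w \in L(G)$ would force an $S$-edge from $p_0$ to the $\omega$-predecessor of $y$, contradicting $\varphi_4$. Hence $w \notin L(G)$ and $G$ is nonuniversal.

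I expect the backward direction to be the main obstacle, and within it two points deserve the most care. The first is guaranteeing that the extracted object really is a finite string rather than a degenerate or cyclic configuration; this is precisely where the $X$-gadget must be shown to derive a contradiction from any repeated position, mirroring the self-loop argument of Example~\ref{infinity}. The second is the grammar-soundness induction: I must check that the anchoring established by $\varphi_1,\varphi_2$ reaches every position along the span being derived, so that $\varphi_3$ is applicable at each inductive step and the nonterminal edges faithfully track derivability of substrings of $w$.
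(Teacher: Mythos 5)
Your proposal is correct in substance, and your forward direction coincides with the paper's construction: the paper uses nodes $0,1,\dots,n+1,\infty$ where you use $x,p_0,\dots,p_k,y$, with the same $\alpha$-anchoring, the same derivability-encoding nonterminal edges, and $X$ as the strict order along the chain (your version, taking $X$ over \emph{all} positions of the chain, is the right one). Your backward direction, however, is organized genuinely differently from the paper's. The paper proves the contrapositive: assuming $G$ universal, it uses $\varphi_4$ as an engine which, fed the $S$-edge $u_m \edge S u_1$ supplied by universality together with a soundness induction through $\varphi_3$, keeps producing new letter-predecessors, yielding an infinite sequence whose elements the $X$-gadget forces to be pairwise distinct---contradicting finiteness. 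You instead argue directly from a finite satisfying structure: extend the chain of letter-predecessors as far as possible, use the $X$-gadget plus finiteness to terminate, and read off an explicit nonempty word $w \notin L(G)$. Your route yields a constructive witness of nonuniversality and, since your induction on derivation length works for arbitrary productions of $G$, it does not even need the paper's reduction to Chomsky normal form; the paper's route lets $\varphi_4$ drive the growth and so never has to analyze where a maximal chain stops.

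That analysis of where the chain stops is the one step you must make precise, and it is exactly the pivot between the two organizations. You cannot simply ``trace the letter edges from the start position $p_0$ (the $\alpha$-successor of $x$ lying in $\alpha - \alpha\Sigma$)'': in an arbitrary satisfying structure nothing yet guarantees that such a position exists, let alone that it is letter-connected to the $\omega$-predecessor $t$. The correct move is to build the chain \emph{backward} from $t$: as long as the current endpoint $u_m$ admits some $z$ with $x \edge\alpha z \edge{b} u_m$, extend by $u_{m+1}:=z$ (anchoring of $z$ toward $y$ then follows from $\varphi_2$, as you note); since $\varphi_5$, $\varphi_6$, $\varphi_7$ make all chain elements distinct, finiteness forces this greedy extension to halt, and halting means precisely that $(x,u_m) \in (\alpha - \alpha\Sigma)$. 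This is how your $p_0$ comes into existence---as the termination point of the backward construction, not as the source of a forward trace. With that repair the rest of your plan goes through: the $\varphi_3$-soundness induction turns $w \in L(G)$ into an $S$-edge from $p_0$ to $t$, and together with $t \edge\omega y$ this puts $(x,y)$ in $\varphi_4$, a contradiction, so $w \notin L(G)$. One cosmetic slip: the end node $t$ need not have an incoming $\alpha$-edge from $x$ (your ``every position'' claim overshoots), but this is harmless, since $t$ only ever occurs as a right endpoint of the sandwich clauses, which require only $t \edge\alpha y$.
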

\begin{proof}
The proof idea is an elaboration of the idea behind
Example~\ref{infinity}.
For the only-if direction, assume there exists a
nonempty word $b_1 \dots b_n$ not in $L(G)$.  We must show that
$e_G$ is finitely satisfiable.  Thereto
we construct the following structure $I$ over $\Gamma_G$:
\begin{align*}
\alpha^I & = \{(0,i) \mid i\in \{1,\dots,n\}\} \cup
\{(i,\infty) \mid i \in \{1,\dots,n+1\}\} \\
\omega^I & = \{(n+1,\infty)\} \\
b^I & = \{(i,i+1) \mid i \in \{1,\dots,n\} \ \& \  b_i=b\} \qquad
\text{for $b \in \Sigma$} \\
X^I & = \{(i,j) \mid i,j \in \{1,\dots,n\} \ \& \ i<j\} \\
Y^I & = \{(i,j) \mid i,j \in \{1,\dots,n\} \ \& \ i<j \ \& \
b_i \dots b_{j-1} \in L(G,Y)\} \qquad \text{for $Y \in V$}
\end{align*}
Here, $L(G,Y)$ is the set of words that can be generated from the
nonterminal $Y$.

We claim that $(0,\infty) \in e_G(I)$.  That 
$(0,\infty) \in \alpha \Sigma \alpha(I)$, and that
$$ (0,\infty) \notin (\varphi_1 \cup \varphi_2 \cup \varphi_3
\cup \varphi_5 \cup \varphi_6 \cup \varphi_7)(I), $$ can be
straightforwardly verified.  To see that 
$(0,\infty) \notin \varphi_4(I)$, assume the contrary.  Then
there exist edges
$0 \edge \alpha i \edge S j \edge \omega \infty$ in $I$ so that
$(0,i) \in (\alpha-\alpha\Sigma)(I)$.  This is only possible for
$i=1$ and $j=n+1$.  But then there is no edge $i \edge S j$ in
$I$ because $b_1\dots b_n \notin L(G)$.  Hence we have a
contradiction.

For the converse direction, assume that $G$ is universal.  We
show that $e_G$ is not finitely satisfiable.  It will be
convenient to assume that $G$ is in Chomsky normal form \cite{hu}, so that
every production is of one of the two forms $Z_0 \to Z_1 Z_2$ or
$Z_0 \to b$, with $Z_1,Z_2 \in V$ and $b \in \Sigma$.

Suppose, for the
sake of contradiction, that some pair $(x,y)$ belongs to the
result of $e_G$ evaluated in some finite structure $I$.  To avoid
clutter, in what follows we omit explicit references to $I$.
Since $(x,y) \in \varphi_0$, there exist edges $x \edge \alpha
u_2 \edge {b_1} u_1 \edge \omega y$ for some $b_1 \in \Sigma$.
Since $(x,y) \notin
\varphi_1$, we have also $u_1 \edge \alpha y$, and since $(x,y)
\notin \varphi_2$, we have $u_2 \edge \alpha y$ as well.
Since $(x,y) \notin \varphi_3$, we have
$u_2 \edge Y u_1$ for every production $Y \to b_1$ in $P$.

The above construction of $u_1$ and $u_2$ forms the basis for the
inductive construction of an infinite sequence $u_1,u_2,\dots$ so
that the following properties are satisfied for every
natural number $n \geq 2$:
\begin{enumerate}
\item \label{one}
$u_1 \edge \omega y$;
\item \label{two}
$x \edge \alpha u_i$ for each $2 \leq i\leq n$, and $u_i \edge \alpha y$ for
each $1 \leq i \leq n$;
\item \label{three}
for each $1 \leq i \leq n-1$ there is an edge
$u_{i+1} \edge {b_i} u_i$ with $b_i \in \Sigma$;
\item \label{four}
for every $Y \in V$ and every $n \geq j > i \geq 1$ such that
$b_{j-1} \dots b_i \in L(G,Y)$, 
there is an edge $u_j \edge Y u_i$.
\end{enumerate}

Specifically, for any $m \geq 2$, assume we already have defined
$u_1,\dots,u_m$; we then define $u_{m+1}$ as follows.
Since $G$ is universal, $b_{m-1}\dots b_1 \in L(G)$.
Hence, by property~(\ref{four}) above, $u_m \edge S u_1$.  Since
$(x,y) \notin \varphi_4$, there must exist an element $u$ with
edges $x \edge \alpha u \edge {b_m} u_m$ for some $b_m \in \Sigma$.
We set $u_{m+1} := u$ and check that the above properties are
still satisfied.

For property~(\ref{one}) nothing has changed.  For
property~(\ref{two}) we have $x \edge \alpha u_{m+1}$ given, and
$u_{m+1} \edge \alpha y$ follows from $(x,y) \notin \varphi_2$.
For property~(\ref{three}), we have $u_{m+1} \edge {b_m} u_m$
given.  For property~\ref{four}, we verify this by induction on
the length of the string $b_{j-1} \dots b_i$.  If $j=i+1$, the
production $Y \to b_i$ belongs to $P$ and we have $u_j \edge Y
u_i$ by $(x,y) \notin \varphi_3$.  If $j>i+1$, consider a
derivation tree of $b_{j-1} \dots b_i$ from $Y$, and let $Y \to
Z_1 Z_2$ be the production used at the root of the derivation
tree.  Then there exists $k$ strictly between $j$ and $i$ so that
$b_{j-1} \dots b_k \in L(G,Z_1)$ and $b_{k-1}\dots b_i \in
L(G,Z_2)$.  By induction we have edges $u_j \edge {Z_1} u_k \edge
{Z_1} u_i$, which implies $u_j \edge Y u_i$ by 
$(x,y) \notin \varphi_3$.

Now since $(x,y) \notin \varphi_5$, we have $u_{i+1} \edge X u_i$ for
each $i \geq 1$.   Then since $(x,y) \notin \varphi_6$, we have
$u_j \edge X u_i$ for all $j > i \geq 1$.  Since the structure is
finite, there must exist $1 \leq i < j$ so that $u_i=u_j$.  Hence
we have a self-loop $u_j \edge X u_j$, implying $(x,y) \in
\varphi_7$ which is in contradiction with $(x,y) \in e_G$.
\end{proof}

\section{Reduction to a single relation name} \label{secone}

In this section we establish our main theorem:

\begin{theorem}
The finite satisfiability problem for $\dato$ is undecidable.
\end{theorem}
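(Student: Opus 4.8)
The plan is to reduce the finite satisfiability problem for the multi-relation expressions $e_G$ of Section~\ref{secproof} to the single-relation case, by simulating each relation name with a positive (difference-free) expression over a single name $a$. Write $\Gamma_G=\{b_1,\dots,b_m\}$. To each $b_i$ I associate a pair of distinct natural numbers $(p_i,q_i)$ and the difference-free \emph{detector}
$$ \beta_i \;=\; \underbrace{a\cdots a}_{p_i}\ \cap\ \underbrace{a\cdots a}_{q_i}, $$
and I let $e_G^\sharp$ be the expression over $\{a\}$ obtained from $e_G$ by replacing every occurrence of each $b_i$ by $\beta_i$ (so $\Sigma$ becomes $\bigcup_{b\in\Sigma}\beta_b$, and so on). Since each $\beta_i$ has difference degree $0$, the substitution leaves the difference degree unchanged, so $\deg e_G^\sharp=\deg e_G\le 2$ and $e_G^\sharp\in\dato$. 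The map $G\mapsto e_G^\sharp$ is clearly computable, so it suffices to show that $e_G^\sharp$ is finitely satisfiable iff $G$ is nonuniversal.

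The easy direction is the backward one, and it needs no property of the detectors at all. If $e_G^\sharp$ has a finite model $J$ over $\{a\}$, define a finite $\Gamma_G$-structure $I$ by $b_i^I:=\beta_i(J)$. A routine structural induction on expressions, using only that the semantics of $\cup$, $\cap$, $-$, and composition is compositional, shows $f(I)=f^\sharp(J)$ for every subexpression $f$, and in particular $e_G(I)=e_G^\sharp(J)\ne\emptyset$. Hence $e_G$ is finitely satisfiable, and the Proposition of Section~\ref{secproof} yields that $G$ is nonuniversal.

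For the forward direction I start from the explicit finite model $I$ of $e_G$ constructed in Section~\ref{secproof}, and build from it a finite $\{a\}$-structure $J$ with $\beta_i(J)=b_i^I$ for every $i$; then $e_G^\sharp(J)=e_G(I)\ne\emptyset$. The key feature I will exploit is that this $I$ is \emph{acyclic}: all its edges increase the rank in the linear order $0<1<\dots<n+1<\infty$. The structure $J$ keeps the vertices of $I$ as ``hubs'' and, for each edge $(x,y)\in b_i^I$, inserts a fresh \emph{theta-gadget}: two internally disjoint directed $a$-paths from $x$ to $y$, of lengths $p_i$ and $q_i$, whose intermediate vertices are private to that gadget. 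Because $I$ is acyclic, so is $J$; hence every $a$-walk is a simple path and $a^{\ell}(u,v)$ holds exactly when there is a directed path of length $\ell$ from $u$ to $v$. In particular $(x,y)\in\beta_i(J)$ for every intended $b_i$-edge.

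The main obstacle is exactness: I must rule out \emph{spurious} pairs $(u,v)\in\beta_i(J)$, that is, pairs admitting directed paths of both lengths $p_i$ and $q_i$ without arising from a single $b_i$-gadget. This is where the numbers $(p_i,q_i)$ are chosen with care. Taking all $2m$ values distinct and lying in an interval $[N,2N)$ forces every hub-to-hub path crossing two or more gadgets to have length at least $2N$, so below $2N$ the only path lengths available between distinct hubs are those of the directly inserted gadgets; distinctness of the values then makes $\{p_i,q_i\}$ simultaneously realizable only across a genuine $b_i$-edge. The remaining danger is that one endpoint is internal to a gadget, but such a vertex is reached from its source hub by a \emph{unique} length, so any two competing total lengths can differ only on a hub-to-hub segment, and choosing the differences $|p_i-q_i|$ pairwise distinct produces a contradiction. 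The use of intersection is essential here: single-length detectors $a^{p_i}$ would match hub-to-internal pairs, matching the phenomenon noted in Remark~\ref{remarkintersection}. Carrying out this length bookkeeping, uniformly in $G$, and checking the finitely many path shapes an acyclic $J$ can produce, is the technical heart of the argument; everything else is assembly.
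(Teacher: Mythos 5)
Your strategy is viable and genuinely different from the paper's: the paper gives a model-generic two-step translation (first $a_i \mapsto b(c \cap c^{i+1})b$, then $b \mapsto a(a\cap a^2)a$ and $c \mapsto a(a\cap a^3)a$), whose constant-size gadgets pair a long path with a \emph{chord}, so the detector $a \cap a^{\ell}$ can only be matched by the chord and exactness holds for \emph{arbitrary} finite models via an exhaustive case analysis; you instead reduce grammar nonuniversality directly, exploiting acyclicity of the specific model of Section~\ref{secproof} and an arithmetic choice of two path lengths per relation name. That is legitimate for the theorem as stated. But there is a concrete gap at the decisive point. Your analysis of a spurious pair with one endpoint internal to a gadget correctly reduces (via the unique entry/exit depths of internal vertices) to the equation $p_i - q_i = L_1 - L_2$, where $L_1,L_2 < 2N$ are hub-to-hub lengths, hence branch lengths of gadgets between one fixed pair of hubs. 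You then claim that making the matched differences $|p_i - q_i|$ pairwise distinct yields a contradiction. That condition is too weak: $L_1$ and $L_2$ need not be the two branches of a \emph{single} gadget, because in the Section~\ref{secproof} structure distinct relation names share edges between the same hubs --- e.g.\ every pair in $Y^I$ and every terminal edge $(i,i+1)$ also lies in $X^I$ --- so between two hubs you can have branches of lengths $p_k$ and $q_{k'}$ with $k \neq k'$, producing cross differences $|p_k - q_{k'}|$, $|p_k - p_{k'}|$, $|q_k - q_{k'}|$ that your condition does not control. If such a cross difference equals some $|p_i - q_i|$, then a vertex at the appropriate depth inside a branch entering the shared hub forms, with the far hub, a genuine spurious pair in $\beta_i(J)$, and your exactness claim fails.

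The repair is easy but must be stated: require the $2m$ values $\{p_1,q_1,\dots,p_m,q_m\}$ to form a Sidon ($B_2$) set inside $[N,2N)$, i.e.\ \emph{all} pairwise differences distinct; such sets of size $2m$ exist for computable $N = O(m^2)$ (shift a Sidon set in $[1,N]$ by $N$, which preserves both the Sidon property and the interval constraint that kills multi-gadget paths). Then $p_i - q_i = L_1 - L_2$ forces $\{L_1,L_2\} = \{p_i,q_i\}$, contradicting the strictly positive internal offset, and your remaining cases (both endpoints hubs, where several gadgets over the same hub pair are harmless since each value identifies its relation; both endpoints internal; the same-branch unique-path case) go through as sketched. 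Note also that you defer exactly this ``length bookkeeping'' as the self-declared technical heart, so the proof as written is incomplete precisely where the paper invests its effort (the exhaustive lists in Claims~B and~C); and your argument, depending on acyclicity of one particular model and on grammar-size-dependent gadgets, does not yield the paper's stronger expression-generic equisatisfiability propositions, though the theorem itself does not need them.
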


The result of the previous Section already implies the
undecidability of the finite satisfiability problem for $\dat$.
Hence, to prove the above Theorem, it suffices to translate any given
expression $e$ over any given vocabulary $\Gamma$ to an
expression $e'$ over a single relation name, so that $\deg e =
\deg {e'}$ and $e$ is satisfiable if and only if
$e'$ is.

We will do this in two steps.  In a first step, we will reduce to
two relation names; in the second step we reduce further from two
to one.

Let $\Gamma=\{a_1,\dots,a_k\}$ ordered in an
arbitrary manner and let $b$ and $c$ be two symbols not in $\Gamma$.  We
define $e'$ as the expression obtained from $e$ by replacing
every occurrence of $a_i$, for $i=1,\dots,k$, by $b(c \cap
c^{i+1})b$, where $c^j$ denotes the composition $c\cdots c$ ($j$
times).

\begin{proposition}
$e$ is finitely satisfiable if and only if $e'$ is.
\end{proposition}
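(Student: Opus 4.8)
The plan is to route both directions through a single compositionality observation. Write $g_i := b(c \cap c^{i+1})b$, so that $e'$ is exactly the expression obtained from $e$ by substituting $g_i$ for every leaf $a_i$. I would first record a substitution lemma: for any finite structure $J$ over $\{b,c\}$, if one defines a structure $I$ over $\Gamma$ by $a_i^I := g_i(J)$, then $e(I) = e'(J)$. This is proved by a routine structural induction on $e$, since the five evaluation clauses are applied identically in $e$ and $e'$ above the substituted leaves, and at the leaves the two sides agree by the definition of $I$. This lemma at once settles the easy (if) direction: given a finite $J$ with $e'(J) \neq \emptyset$, set $a_i^I := g_i(J)$; each $g_i(J)$ is finite, so $I$ is finite, and $e(I) = e'(J) \neq \emptyset$.

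For the converse (only-if) direction the work is to realize a prescribed $I$ using just two relations. Given a finite $I$ with $e(I) \neq \emptyset$, I would build a finite structure $J$ over $\{b,c\}$ whose nodes are the elements appearing in $I$ together with fresh auxiliary nodes, arranged into one \emph{private} gadget per edge of $I$, so that $g_i(J) = a_i^I$ for every $i$. Concretely, for each $a_i$-edge $(x,y)$ of $I$, introduce fresh port nodes $p,q$ and fresh interior nodes $r_1,\dots,r_i$, and add: the $b$-edges $x \edge b p$ and $q \edge b y$; a direct $c$-edge $p \edge c q$; and the $c$-path $p \edge c r_1 \edge c \cdots \edge c r_i \edge c q$ of length $i+1$. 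No auxiliary node is shared between gadgets. Once $g_i(J) = a_i^I$ is established for all $i$, the substitution lemma applied to $J$ produces a structure agreeing with $I$ on every relation, hence $e'(J) = e(I) \neq \emptyset$, and $J$ is finite.

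The crux, and the step I expect to be the main obstacle, is the verification that $g_i(J) = a_i^I$, specifically ruling out spurious pairs. The inclusion $\supseteq$ is immediate, since the gadget of an $a_i$-edge $(x,y)$ puts $(p,q)$ into both $c(J)$ and $c^{i+1}(J)$. For $\subseteq$, suppose $(x,y) \in g_i(J)$, witnessed by nodes $p',q'$ with $x \edge b p'$, $(p',q') \in (c \cap c^{i+1})(J)$, and $q' \edge b y$. Since $b$-edges run only from a real node to a gadget entry and from a gadget exit to a real node, $p'$ is an entry port and $q'$ an exit port; and since $c$-edges live entirely inside a single gadget, the direct $c$-edge $p' \edge c q'$ (present as $(p',q') \in c(J)$) runs from an entry port to an exit port, which is possible only within one gadget, say the gadget of an $a_j$-edge $(x',y')$. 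The $b$-edges then force $x=x'$ and $y=y'$, and the only $c$-path lengths from $p'$ to $q'$ available inside that gadget are $1$ and $j+1$; hence $(p',q') \in c^{i+1}(J)$ forces $i+1 \in \{1,j+1\}$, and as $i \geq 1$ this gives $i=j$, so $(x,y) = (x',y') \in a_i^I$. I would close by remarking that finiteness is preserved in both directions, and that $\deg e' = \deg e$ since each $g_i$ contains no difference operator.
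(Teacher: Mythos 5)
Your proposal is correct and matches the paper's proof essentially step for step: the same substitution lemma proved by structural induction, the same gadget (a private length-$(i+1)$ $c$-chain plus a shortcut $c$-edge between fresh ports attached by $b$-edges) for each $a_i$-edge, and the same key verification that $b(c \cap c^{i+1})b(J) = a_i^I$. The only cosmetic difference is that you rule out spurious pairs via structural invariants about where $b$- and $c$-edges can run, where the paper enumerates the candidate pairs $(z_1,z_2)$ explicitly --- and your explicit handling of the length-$1$ shortcut walk (excluded since $i+1 \geq 2$) is in fact slightly more careful than the paper's phrasing.
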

\begin{proof}
For the if-direction, we convert any structure $J$ over $\{b,c\}$ 
to a structure $K$ over $\Gamma$ as follows:  for each $a_i \in
\Gamma$, we set $a_i^K = b(c\cap
c^{i+1})b(J)$.  It is now readily verified by structural
induction that $e'(J)=e(K)$ for every expression $e$.  In
particular, if $e'(J)$ is nonempty, then so is $e(K)$.

For the only-if direction, we convert any structure $K$ over
$\Gamma$ to a structure $J$ over $\{b,c\}$ as follows.
Recall \cite{ahv_book} that the \emph{active domain} of $K$,
denoted by $\adom K$, equals the set of all elements that appear
as first or second component of a pair in a relation of $K$.  Now for each
$i=1,\dots,k$ and each $(x,y) \in a_i^K$, choose a set
$\{u^{x,y,i}_1, \dots, u^{x,y,i}_{i+2}\}$ of $i+2$ distinct
elements.  All these sets must be pairwise disjoint and disjoint
from $\adom K$.  Then $b^J$
consists of all edges $x \to u^{x,y,i}_1$ and $u^{x,y,i}_{i+2}
\to y$ for every $i=1,\dots,k$ and every $(x,y) \in a_i^K$.
Moreover $c^J$ consists of all edges $$ u^{x,y,i}_1 \to \cdots
\to u^{x,y,i}_{i+2} \ \text{and} \ u^{x,y,i}_1 \to
u^{x,y,i}_{i+2} $$ for every $i=1,\dots,k$ and every $(x,y) \in
a_i^K$.

For every expression $e$ we now again claim that $e(K)=e'(J)$.  
We can prove this again by induction on the structure of $e$.
The only potential difficulty is present in the basis of
the induction, where $e$ is a relation name $a_i \in \Gamma$.
The inclusion $e(K) \subseteq e'(J)$ holds by construction.
For the converse inclusion, assume $(u,v) \in b(c \cap c^{i+1})b(J)$.  
Then there exist edges $u \edge b z_1 \edge c z_2 \edge b v$ such that
$(z_1,z_2) \in c^{i+1}(J)$.  Due to the edge $u \edge b z_1$,
there are only two possibilities for $u$:
\begin{itemize}
\item
$u$ equals $u^{x,y,j}_{j+2}$,
for some $x$, $y$ and $j$ such that $(x,y) \in a_j^K$.
Then $z_1$ must be $y$.  However, by $z_1 \edge c z_2$, this is
impossible, since there is no $c$-edge leaving $y$.
\item
$u$ equals $x$, for some $y$ and $j$ such that $(x,y) \in a_j^K$.
Then $z_1$ is $u^{x,y,j}_1$ and there are two possibilities for $z_2$:
\begin{enumerate}
\item
$z_2$ is $u^{x,y,j}_2$.  By $z_2 \edge b v$ this is impossible,
since there is no $b$-edge leaving $u^{x,y,j}_2$.
\item
$z_2$ is $u^{x,y,j}_{j+2}$, so $v$ is $y$.  Since
$(z_1,z_2) \in c^{i+1}(J)$, and the only chain of $c$-edges from
$u^{x,y,j}_1$ to $u^{x,y,j}_{j+2}$ is the chain
$ u^{x,y,j}_1 \to \cdots
\to u^{x,y,j}_{j+2}$, we must have $j=i$.  Hence, we obtain that
$(u,v)=(x,y) \in a_i^K$ as desired.
\end{enumerate}
\end{itemize}
\end{proof}

For the reduction to a single relation name, consider any
expression $e$ over the vocabulary $\{b,c\}$ with two relation
names, and let $a$ be a third symbol.  We define the expression
$\hat e$ over the vocabulary $\{a\}$ as the expression obtained
from $e$ by replacing every occurrence of $b$ by $a(a \cap a^2)a$
and every occurrence of $c$ by $a(a \cap a^3)a$.  Again we show:

\begin{proposition}
$e$ is finitely satisfiable if and only if $\hat e$ is.
\end{proposition}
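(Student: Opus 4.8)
The plan is to mimic the structure of the previous proposition, which reduced from $k$ relation names to two. The same two-directional argument applies: I would first handle the easier if-direction by converting a structure over $\{a\}$ into a structure over $\{b,c\}$, and then handle the harder only-if direction by building, from a given structure over $\{b,c\}$, a structure over $\{a\}$ that realizes the replacement gadgets $a(a\cap a^2)a$ for $b$ and $a(a\cap a^3)a$ for $c$. In both cases the crux is establishing the invariant $e(\cdot) = \hat e(\cdot)$ by structural induction, where the only nontrivial case is the base case of a single relation name.

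For the if-direction I would take any structure $M$ over $\{a\}$ and define a structure $N$ over $\{b,c\}$ by setting $b^N = a(a\cap a^2)a(M)$ and $c^N = a(a\cap a^3)a(M)$. Then a routine structural induction shows $\hat e(M) = e(N)$: the base cases are exactly these definitions, and the inductive cases follow because replacement commutes with $\cup$, $\cap$, $-$, and composition. Hence if $\hat e(M)$ is nonempty so is $e(N)$. This direction is entirely analogous to the previous proposition and I expect no obstruction.

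For the only-if direction I would start from a structure $N$ over $\{b,c\}$ and construct a structure $M$ over $\{a\}$ by replacing each $b$-edge and each $c$-edge with a dedicated gadget of fresh intermediate vertices, so that the only way to traverse the $a(a\cap a^2)a$ pattern (respectively $a(a\cap a^3)a$) is through a gadget that originally was a $b$-edge (respectively $c$-edge). Concretely, for each $b$-edge $(x,y)$ I introduce fresh distinct vertices forming a short $a$-path from $x$ to $y$ together with a parallel length-$2$ shortcut realizing $a\cap a^2$, and for each $c$-edge I use an analogous gadget tuned to length $3$. All fresh vertex sets must be pairwise disjoint and disjoint from $\adom N$, so that no unintended compositions arise. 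The inductive invariant to prove is again $e(N) = \hat e(M)$, reducing to the base case where I must verify that $a(a\cap a^2)a(M) = b^N$ and $a(a\cap a^3)a(M) = c^N$.

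The main obstacle, as in the previous proposition, is the $\subseteq$ part of this base-case equality: showing that every pair matching the gadget expression in $M$ comes from a genuine original $b$-edge (or $c$-edge) and not from some accidental alignment of gadget vertices. The hard part will be designing the two gadgets so that their $a\cap a^2$ and $a\cap a^3$ shortcuts cannot be confused with each other or spuriously realized across distinct gadgets; I would argue by case analysis on where an edge $u\edge a z$ can lead, ruling out the possibilities that place $z$ at an interior gadget vertex with no outgoing or incoming $a$-edge of the required kind, exactly as the previous proposition eliminated the bad cases for $u^{x,y,j}_2$ and for endpoints with no outgoing $c$-edge. Once the gadgets are chosen so that $b$ and $c$ are distinguished purely by the length of the available shortcut ($2$ versus $3$), the remaining verification is the same routine chase, and the degree equality $\deg e = \deg{\hat e}$ is immediate since the replacement uses no difference operator.
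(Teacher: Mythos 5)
Your proposal is correct and follows essentially the same route as the paper: the same definitional conversion $b^N = a(a\cap a^2)a(M)$, $c^N = a(a\cap a^3)a(M)$ for the if-direction, and for the only-if direction the same edge-replacement gadgets (a three-vertex path with a shortcut edge for $b$, a four-vertex path with a shortcut for $c$, all fresh vertex sets pairwise disjoint and disjoint from the active domain), with the base case of the structural induction settled by the same exhaustive case analysis of where a pair in $a\cap a^2$ or $a\cap a^3$ can sit, including the cross-gadget alignments you flag. The paper carries out that case analysis in full (its Claims B and C), but your plan identifies all the points where it is needed.
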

\begin{proof}
For the if-direction, we convert any structure $J$ over $\{a\}$ 
to a structure $I$ over $\{b,c\}$ as follows:
$b^I = a(a \cap a^2)a(J)$ and $c^I = a(a \cap a^3)a(J)$.
It is now readily verified by structural
induction that $\hat e(J)=e(I)$ for every expression $e$ over
$\{b,c\}$.  In particular, if $\hat e(J)$ is nonempty, then so is $e(I)$.

For the only-if direction, we convert any structure $I$ over
$\{b,c\}$ to a structure $J$ over $\{a\}$ as follows.
For every edge $x \edge b y$ in $I$ we choose a set
$\{u^{x,y,b}_1,\dots, u^{x,y,b}_3\}$ of three distinct elements;
for every edge $x \edge c y$ in $I$ we choose a set
$\{u^{x,y,c}_1,\dots, u^{x,y,c}_4\}$ of four distinct elements.
All these sets must be pairwise disjoint and disjoint from $\adom
I$. We now define $a^J$ to consist of all edges
$$ x \to
u^{x,y,b}_1 \to
u^{x,y,b}_2 \to
u^{x,y,b}_3 \to y \ \text{and} \ 
u^{x,y,b}_1 \to u^{x,y,b}_3 $$
for every edge $x \edge b y$ in $I$, plus all edges
$$ x \to
u^{x,y,b}_1 \to
u^{x,y,b}_2 \to
u^{x,y,b}_3 \to
u^{x,y,b}_4 \to y \ \text{and} \ 
u^{x,y,b}_1 \to u^{x,y,b}_4 $$
for every edge $x \edge c y$ in $I$.

We now make Claim~B and Claim~C.
\begin{description}
\item[Claim~B:] $b^I = a(a\cap a^2)a(J)$.  The inclusion from left
to right holds by construction.  For the inclusion from right to
left, let $(u,v) \in a(a \cap a^2)a(J)$.  Then there exist edges
$u \to z_1 \to z_2 \to v$ in $J$ with $(z_1,z_2) \in (a \cap a^2)(J)$.
An obvious possibility is that $z_1 = u^{x,y,b}_1$ and $z_2 =
u^{x,y,b}_3$ for some $(x,y) \in b^I$.  Then $u$ must equal $x$
and $v$ must equal $y$ so $(u,v) = (x,y) \in b^I$ as desired.
Let us now verify that there are no other possibilities for $z_1$
and $z_2$.  Thereto we list all  other possibilities for a pair $(z_1,z_2)
\in a^2(J)$:
\begin{itemize}
\item
$(u^{x,y,b}_1,y)$ with $x$ and $y$ as above;
\item
$(u^{x,y,b}_2,y)$;
\item
$(x,u^{x,y,b}_2)$;
\item
$(x,u^{x,y,b}_3)$;
\item
$(u^{x,y,b}_3,u^{y,z,r}_1)$, with $r=b$ or $c$, for some $z$ such that
$(y,z) \in r^I$;
\item
$(u^{x',y',c}_1,y')$ for some $(x',y') \in c^I$;
\item
$(u^{x',y',c}_1,u^{x',y',c}_3)$;
\item
$(u^{x',y',c}_2,u^{x',y',c}_4)$;
\item
$(u^{x',y',c}_3,y')$;
\item
$(u^{x',y',c}_4,u^{y',z',r}_1)$, with $r=b$ or $c$, for some $z'$ such that
$(y',z') \in r^I$;
\item
$(x',u^{x',y',c}_2)$;
\item
$(x',u^{x',y',c}_4)$. 
\end{itemize}
In all these cases, there is no edge $z_1 \to z_2$ in $J$, so
that $(z_1,z_2) \notin (a\cap a^2)(J)$.

\item[Claim C:]
$c^I = a(a\cap a^3)a(J)$.  The inclusion from left
to right holds by construction.  For the inclusion from right to
left, let $(u,v) \in a(a \cap a^3)a(J)$.  Then there exist edges
$u \to z_1 \to z_2 \to v$ in $J$ with $(z_1,z_2) \in (a \cap a^3)(J)$.
The obvious possibility is that $z_1 = u^{x,y,c}_1$ and $z_2 =
u^{x,y,c}_4$ for some $(x,y) \in c^I$.  Then $u$ must equal $x$
and $v$ must equal $y$ so $(u,v) = (x,y) \in c^I$ as desired.
We now verify that there are no other possibilities for $z_1$
and $z_2$.  Thereto we list all other possibilities for a pair $(z_1,z_2)
\in a^3(J)$:
\begin{itemize}
\item
$(u^{x,y,c}_1,u^{y,z,r}_1)$,
with $x$ and $y$ as above,
$r=b$ or $c$, and some $z$ such that $(y,z) \in r^I$;
\item
$(u^{x,y,c}_2,y)$;
\item
$(u^{x,y,c}_3,u^{y,z,r}_1)$;
\item
$(u^{x,y,c}_4,u^{y,z,r}_2)$;
\item
$(u^{x,y,c}_4,u^{y,z,r}_3)$ if $r=b$;
\item
$(u^{x,y,c}_4,u^{y,z,r}_4)$ if $r=c$;
\item
$(x,u^{x,y,c}_3)$;
\item
$(x,y)$;
\item
$(u^{x',y',b}_1,y')$ for some $(x',y') \in b^I$;
\item
$(u^{x',y',b}_1,u^{y',z',r'}_1)$,
with $r'=b$ or $c$, for some $z'$ such that $(y',z') \in r'^I$;
\item
$(u^{x',y',b}_2,u^{y',z',r'}_1)$;
\item
$(u^{x',y',b}_3,u^{y',z',r'}_2)$;
\item
$(u^{x',y',b}_3,u^{y',z',r'}_3)$ if $r'=b$;
\item
$(u^{x',y',b}_3,u^{y',z',r'}_4)$ if $r'=c$;
\item
$(x',u^{x',y',b}_3)$;
\item
$(x',y')$.
\end{itemize}
In all these cases, there is no edge $z_1 \to z_2$ in $J$, so
that $(z_1,z_2) \notin (a\cap a^3)(J)$.
\end{description}

From Claims B and C it now follows readily by structural
induction that $ e(I)=\hat e(J) $ for every expression $e$ over
$\{b,c\}$.  In particular, if $e(I)$ is nonempty, then so is
$\hat e(J)$.
\end{proof}

\section{Conclusion} \label{seconcl}

In $\dat$-expressions, applications of the set difference
operation can be nested at most once.  It is thus natural to
wonder what happens in the fragment where set difference cannot
be nested at all.  In a companion paper, we consider the fragment
of the full Tarski Algebra (TA), with general complementation,
defined by the restriction that
complement can only be applied to expressions that do not already
contain an application of complement.  It turns out that finite
satisfiability for TA-expressions without nested complement is decidable and
even belongs to NP.

As already mentioned in Remark~\ref{remarkintersection},
it remains open whether
satisfiability for $\dat$-expressions without the intersection
operation is decidable.
As already mentioned in the Introduction,
the decidability of unrestricted satisfiability for DA remains
open as well.

\section*{Acknowledgment}

We are indebted to Stijn Vansummeren for a number of inspiring
discussions on the topic of this paper.

\bibliographystyle{alpha}
\bibliography{database}

\end{document}